\documentclass[sigconf]{acmart}

\AtBeginDocument{%
  }

\setcopyright{acmcopyright}
\copyrightyear{2026}
\acmYear{2026}
\acmConference[Conference '26]{Conference '26}{Location}{}
\acmISBN{978-1-4503-XXXX-X/26/06}
\acmDOI{10.1145/XXXXXX.XXXXXX}

\usepackage{tikz}
\usetikzlibrary{positioning,arrows.meta,calc,fit,patterns,decorations.pathreplacing}
\usepackage{pgfplots}
\usepgfplotslibrary{fillbetween}
\pgfplotsset{compat=1.18}
\usepackage{booktabs}

\usepackage{amsmath,amssymb}
\usepackage{amsthm}
\newtheorem{proposition}{Proposition}
\usepackage{algorithm}
\usepackage{algorithmic}
\usepackage{multirow}
\usepackage{xcolor}
\usepackage{subcaption}
\usepackage{graphicx}
\graphicspath{{figures/}}
\definecolor{pathblue}{HTML}{2E78C2}
\definecolor{pathgreen}{HTML}{3A8F3C}
\definecolor{gateorange}{HTML}{D4820E}
\definecolor{lightblue}{HTML}{DCEEFB}
\definecolor{lightgreen}{HTML}{DFF0D8}
\definecolor{lightorange}{HTML}{FEF3CD}

\begin{document}

\title{DeRes: Decoupling Residual Stability and Adaptivity for Scalable CTR Prediction}

\author{Wenzhuo Cheng}
\authornote{Both authors contributed equally to this research.}
\affiliation{\country{}}

\author{Shipeng Nie}
\authornotemark[1]
\affiliation{\country{}}

\author{Qixin Guo}
\affiliation{\country{}}

\author{Xuefeng Sun}
\affiliation{\country{}}

\author{Jianguo Lou}
\affiliation{\country{}}

\author{Zhengwei Zheng}
\authornote{Corresponding author.}
\affiliation{\country{}}

\renewcommand{\shortauthors}{Wenzhuo Cheng et al.}

\begin{abstract}
Transformer-based CTR models face a growing bottleneck at the residual connection: under Pre-Norm, early user-interest signals are diluted layer by layer; the identity skip cannot forget stale interests; and each layer sees only its immediate predecessor, losing long-range cross-layer dependencies. Recent attention-based residual variants (AttnRes) address parts of this in language models, but they drop the protective identity skip and have not been tried in recommendation. Drawing on Dual Path Networks (DPN) and the HORNN view of residuals, we present \textbf{DeRes}, which routes each layer through two paths in parallel---an \textit{Identity residual path} that preserves first-order feature reuse and gradient flow, and a \textit{Block Attention Residual path} that allows each layer to attend over compressed outputs from all earlier blocks for high-order recall. A vector-wise gate determines, for each hidden dimension, the relative weight assigned to each path. We further propose \textit{Pointwise AttnRes}, replacing the Softmax in the cross-layer attention with SiLU so that multiple past blocks can be activated simultaneously and irrelevant ones can receive negative (forgetting) weights---a setting better aligned with CTR's parallel multi-interest patterns. On a large-scale industrial dataset (331M interactions from a major social-media platform), Criteo (45M), and Avazu (40M), DeRes outperforms twelve baselines including OneTrans, TokenMixer-Large, UniMixer, mHC, and AttnRes, achieving up to $+0.32\%$ AUC at under $5\%$ additional FLOPs. Beyond a single operating point, DeRes fits a markedly steeper compute--AUC scaling law ($\gamma{=}0.118$ vs.\ $0.071$ for OneTrans, a $1.66\times$ gap), such that an 8-layer DeRes matches the performance of a 16-layer OneTrans---approximately $2\times$ compute saving at equivalent AUC. Ablations confirm that the dual-path design outperforms either single path, the Identity skip outperforms learnable residuals, and SiLU outperforms Softmax.
\end{abstract}

\begin{CCSXML}
<ccs2012>
 <concept>
  <concept_id>10002951.10003317.10003347.10003350</concept_id>
  <concept_desc>Information systems~Recommender systems</concept_desc>
  <concept_significance>500</concept_significance>
 </concept>
 <concept>
  <concept_id>10010147.10010257.10010293.10010294</concept_id>
  <concept_desc>Computing methodologies~Neural networks</concept_desc>
  <concept_significance>300</concept_significance>
 </concept>
</ccs2012>
\end{CCSXML}

\ccsdesc[500]{Information systems~Recommender systems}
\ccsdesc[300]{Computing methodologies~Neural networks}

\keywords{Click-through rate prediction, Transformer, Residual connections, Attention mechanisms, Recommender systems}

\maketitle

\section{Introduction}

Click-through rate (CTR) prediction sits at the center of search, social feeds, and e-commerce ranking, where small accuracy gains translate to real revenue~\cite{guo2017deepfm,zhou2018din}. The field has moved from shallow feature-interaction models such as FM~\cite{rendle2010fm}, DeepFM~\cite{guo2017deepfm} and DCN-v2~\cite{wang2021dcnv2} to deep Transformer-based architectures~\cite{vaswani2017attention} that jointly model behavior sequences and feature interactions~\cite{chen2019bst,zhai2024hstu}. Recent industrial ranking systems have likewise scaled up backbone depth and width: OneTrans~\cite{chen2025onetrans} at TikTok stacks deep causal Transformers, while TokenMixer-Large~\cite{liu2026tokenmixerlarge} at ByteDance and UniMixer~\cite{wang2026unimixer} at Kuaishou pursue parallel token-mixing designs. All three report clear online gains and motivate a closer look at how depth is actually used inside such models.

As these models grow, a less obvious component becomes a bottleneck: the \textit{residual connection}. The standard form $\mathbf{x}_l = \mathbf{x}_{l-1} + f_l(\mathbf{x}_{l-1})$~\cite{he2016resnet} keeps gradients alive, but it carries three side effects we observe in CTR Transformers:

\begin{enumerate}
    \item \textbf{Pre-Norm signal dilution.} Under Pre-Norm~\cite{xiong2020prenorm}, layer outputs are added to normalized residuals, so early-layer features are gradually washed out. In CTR, this includes the layers that encode the user's earliest click signals.
    \item \textbf{No way to forget.} The identity skip preserves everything with weight $1.0$. When user interests drift, stale signals stay in the residual stream and interfere with the prediction.
    \item \textbf{Single-layer view.} Each layer can only read $\mathbf{x}_{l-1}$, so any pattern that lives across non-adjacent layers is invisible.
\end{enumerate}

Several recent works rethink the inter-layer connection in language models. DenseFormer~\cite{pagliardini2024denseformer} uses static depth-weighted averaging---48 layers match a 72-layer Transformer. Attention Residuals (AttnRes)~\cite{yang2026attnres} replaces fixed weights with Softmax cross-layer attention and reports $\sim$25\% token-equivalent savings on Kimi-48B at $<5\%$ overhead. Hyper-Connections (HC)~\cite{zhu2024hc} and the Sinkhorn-projected mHC~\cite{zhu2025mhc} expand the residual stream itself; MUDDFormer~\cite{li2025muddformer} splits Q/K/V/R into four dynamic dense connections. None of them has been tried on CTR data, and---more importantly for our purposes---none combines dynamic cross-layer attention with an explicit identity protection path.

CTR Transformers are not just shallow language models. They are typically 4--12 layers; the input is a heterogeneous mix of categorical, numerical, and sequential features rather than a token stream; and a single prediction often has to encode several interests in parallel. Zhang et al.~\cite{zhang2025fat} show via Rademacher complexity that the relevant scaling lever is the number of feature fields $F$, not the sequence length $n$. Wukong~\cite{zhang2024wukong} pushes this empirically: a dual-path design (FMB for high-order, LCB for low-order) keeps scaling across two orders of magnitude on 146B examples while single-path baselines saturate.

Building on Dual Path Networks (DPN)~\cite{chen2017dpn}---which combined ResNet's first-order reuse with DenseNet's~\cite{huang2017densenet} high-order exploration under the HORNN view of recurrent depth~\cite{li2015hornn}---we propose \textbf{DeRes}, a dual-path inter-layer connector for CTR Transformers with three components:

\begin{itemize}
    \item an \textbf{Identity residual path} with fixed identity weights, keeping gradient flow and first-order reuse intact;
    \item a \textbf{Block Attention Residual path} that attends, layer by layer, over compressed block summaries of every earlier layer for high-order recall;
    \item a \textbf{vector-wise gate} that lets each hidden dimension decide how much weight to give either path.
\end{itemize}

We then propose \textbf{Pointwise AttnRes}, swapping the Softmax in the cross-layer attention for a SiLU activation~\cite{ramachandran2017swish}. This breaks the zero-sum competition that Softmax imposes: several earlier blocks can be activated at the same time, which matches CTR's parallel multi-interest pattern, and the negative branch of SiLU lets the layer mute irrelevant history---following the same intuition behind HSTU's removal of intra-layer Softmax~\cite{zhai2024hstu}.

The design rests on a single principle: \textit{residual stability and cross-layer adaptivity should not share one operator.} Plain residuals are stable but static; DenseFormer and AttnRes are adaptive but drop the identity protection; learnable residual matrices add mixing capacity but accumulate spectral attenuation across depth. DeRes splits the two roles---the identity path handles low-order and shallow signals, the AttnRes path retrieves higher-order ones from earlier layers---which fits the heterogeneous structure of CTR inputs and, as our results show, optimizes more cleanly.

The contributions of the paper are:
\begin{enumerate}
    \item We introduce Attention Residuals into CTR Transformers for the first time and show that the resulting gain is comparable to that observed in language models, and in relative terms larger given the shallower depth and heterogeneous inputs characteristic of CTR data.
    \item We introduce DeRes, a DPN-style dual-path connector that pairs an identity residual with a block-level attention residual, grounded in the HORNN view.
    \item We introduce Pointwise AttnRes with SiLU, a CTR-specific replacement for Softmax cross-layer attention that supports non-competitive multi-interest encoding.
    \item We evaluate against twelve baselines on a large-scale industrial dataset, Criteo, and Avazu---six feature / sequence models and six inter-layer connectors (DenseFormer, mHC, AttnRes, OneTrans, TokenMixer-L, UniMixer)---with ablations, a scaling-law fit, and per-segment analyses.
\end{enumerate}

\section{Related Work}

\subsection{CTR Prediction Models}

CTR prediction has evolved from feature engineering and shallow models to deep neural architectures. Factorization Machines (FM)~\cite{rendle2010fm} model pairwise feature interactions. Wide\&Deep~\cite{cheng2016wide} combines memorization and generalization. DeepFM~\cite{guo2017deepfm} integrates FM with deep networks; xDeepFM~\cite{lian2018xdeepfm} adds a compressed interaction network for explicit high-order feature crossing. DCN-v2~\cite{wang2021dcnv2} models explicit cross features via cross networks. AutoInt~\cite{song2019autoint} and FiBiNET~\cite{huang2019fibinet} use attention to learn feature interactions automatically.

For user behavior modeling, early RNN-based approaches such as GRU4Rec~\cite{hidasi2016gru4rec} encoded sessions with recurrent units; DIN~\cite{zhou2018din} and DIEN~\cite{zhou2019dien} then introduced attention-weighted interest extraction for CTR. Transformer-based approaches~\cite{vaswani2017attention} brought self-attention to sequential CTR: BST~\cite{chen2019bst} applies Transformer to behavior sequences; SASRec~\cite{kang2018sasrec} uses causal self-attention for next-item prediction. HSTU~\cite{zhai2024hstu} demonstrated trillion-parameter sequential Transformers for recommendation.

Recent industrial work has focused on scaling CTR Transformers. OneTrans~\cite{chen2025onetrans} unifies heterogeneous features into a single Transformer with standard residual connections, reporting +0.748\% DAU online. TokenMixer-Large~\cite{liu2026tokenmixerlarge} introduces Inter-Residual connections and MoE, scaling to 15B parameters. UniMixer~\cite{wang2026unimixer} unifies attention, token-mixing, and FM into a single parameterization with Sinkhorn-constrained mixing, achieving +15\% user retention. EST~\cite{chen2026est} proposes single-directional cross-attention for efficiency. All four use either standard or static inter-layer connections; none explores dynamic cross-layer attention---which is the gap DeRes targets.

A summary of inter-layer connection methods across the full evolution is provided in Table~\ref{tab:ilc_comparison}.

\begin{table}[t]
\caption{Taxonomy of inter-layer connection methods. DeRes is the only design that combines dynamic cross-layer attention with identity-residual protection, and the first such design validated on CTR prediction. FLOPs overhead values are as reported in the original paper / domain; see Table~\ref{tab:efficiency} for measured CTR overhead under our unified protocol.}
\Description{Comparison table of inter-layer connection methods showing weight type, residual protection, cross-layer access, original domain, FLOPs overhead, and CTR validation status.}
\label{tab:ilc_comparison}
\centering
\small
\setlength{\tabcolsep}{2.5pt}
\begin{tabular}{lcccccc}
\toprule
\textbf{Method} & \textbf{Weight} & \textbf{Res.} & \textbf{Cross-} & \textbf{Domain} & \textbf{+FLOPs} & \textbf{CTR} \\
& \textbf{type} & \textbf{prot.} & \textbf{layer} & & & \textbf{val.} \\
\midrule
ResNet & fixed 1.0 & \checkmark & $l$-1 only & General & 0\% & --- \\
DenseNet & concat & $\times$ & all prev. & Vision & high & $\times$ \\
DPN & fixed+concat & \checkmark & all prev. & Vision & moderate & $\times$ \\
DenseFormer & static & $\times$ & all prev. & LLM & $\sim$5\% & $\times$ \\
HC/DHC & learnable & $\times$ & $n$-way & LLM & $\sim$3\% & $\times$ \\
AttnRes & Softmax & $\times$ & all prev. & LLM & $<$5\% & $\times$ \\
mHC & doubly stoch. & $\times$ & $n$-way & LLM & 6.7\% & $\times$ \\
MUDDFormer & MLP & $\times$ & all prev. & LLM & 0.4\% & $\times$ \\
ANCRe & softmax & $\times$ & all prev. & LLM & $<$1\% & $\times$ \\
\midrule
\textbf{DeRes} & \textbf{Id $+$ SiLU} & \checkmark & \textbf{block} & \textbf{CTR} & \textbf{$<$5\%} & \checkmark \\
\bottomrule
\end{tabular}
\end{table}

\subsection{Inter-Layer Connections}

The standard residual connection $\mathbf{x}_l = \mathbf{x}_{l-1} + f_l(\mathbf{x}_{l-1})$~\cite{he2016resnet} has been the default inter-layer mechanism since 2015. DenseNet~\cite{huang2017densenet} concatenates all previous layer outputs, providing direct access but with linearly growing memory. DPN~\cite{chen2017dpn} combines both paradigms in parallel, justified by the HORNN framework~\cite{li2015hornn} that views ResNet as first-order and DenseNet as higher-order recurrent networks.

In the Transformer era, DenseFormer~\cite{pagliardini2024denseformer} learns \textit{static} depth-weighted averaging coefficients, making 48 layers equivalent to 72. HC~\cite{zhu2024hc} expands the residual stream to $n \times d$ dimensions with learnable $n \times n$ transition matrices. Its manifold-constrained variant mHC~\cite{zhu2025mhc} uses Sinkhorn-projected doubly stochastic matrices for stability. However, recent experiments on Qwen3-1.7B/8B (150B tokens) show that Identity residual matrices ($\mathbf{H}^{res} = \mathbf{I}$) outperform learned alternatives---doubly stochastic matrices with $|\lambda_{\min}| \approx 0.49$ cause cumulative signal collapse: $\sigma_{\min} \approx 0.49^{56} \approx 4.5 \times 10^{-18}$ over 28 layers.

AttnRes~\cite{yang2026attnres} replaces fixed residual weights with Softmax-based dynamic cross-layer attention, with clear gains on Kimi-48B. MUDDFormer~\cite{li2025muddformer} uses four independent dynamic connections for Q/K/V/R, getting 2.8B-parameter performance from 6.9B-parameter quality. ANCRe~\cite{houlsby2026ancre} shows that residual topology has exponential impact on convergence speed, and that ingoing softmax normalization is needed for stable training.

\textit{All of these methods target language models.} CTR Transformers are different in three concrete ways: heterogeneous inputs (numerical, categorical, sequential), much shallower depth (4--12 layers), and parallel multi-interest encoding instead of a single autoregressive context. To our knowledge, dynamic inter-layer connections have not previously been adapted to CTR, and DeRes does so with DPN-style dual-path protection and a CTR-specific pointwise activation.

\section{The DeRes Framework}

\subsection{Preliminaries}

\textbf{CTR Transformer.} Given a user's feature set and behavior sequence, the input is embedded as $\mathbf{X}^{(0)} \in \mathbb{R}^{n \times d}$ where $n$ is the sequence length and $d$ the hidden dimension. A stack of $L$ Transformer layers~\cite{vaswani2017attention} processes this input, with each layer $l$ consisting of multi-head self-attention (MHA) and feed-forward network (FFN) sub-layers with Pre-Norm~\cite{xiong2020prenorm} and LayerNorm~\cite{ba2016layernorm}:
\begin{equation}
\begin{aligned}
\mathbf{Y}^{(l)} &= \mathbf{X}^{(l-1)} + \text{MHA}(\text{LN}(\mathbf{X}^{(l-1)})), \\
\mathbf{X}^{(l)} &= \mathbf{Y}^{(l)} + \text{FFN}(\text{LN}(\mathbf{Y}^{(l)})).
\end{aligned}
\end{equation}
The final prediction uses the target item's representation: $\hat{y} = \sigma(\mathbf{w}^\top \mathbf{x}^{(L)}_{\text{target}})$.

\textbf{Standard Residual Limitation.} The identity skip connection $\mathbf{x}_l = \mathbf{x}_{l-1} + f_l(\mathbf{x}_{l-1})$ provides a direct gradient path but constrains information flow: layer $l$ can only access its immediate predecessor, earlier signals are progressively diluted, and once information enters the residual stream, it cannot be selectively forgotten.

\subsection{Design Goals}

A CTR-friendly inter-layer connector should satisfy four properties that the standard residual does not jointly provide.

\textbf{Stability.} The connector must keep gradients clean. CTR data is sparse and high-cardinality, so rare-feature signals are easily lost when depth-wise transformations are unstable. The identity skip remains the most reliable way to preserve this path.

\textbf{Selectivity.} The connector should determine which earlier representations matter for the current prediction. Behavior sequences contain repeated clicks, stale preferences, and noisy exposures, and a fixed skip cannot distinguish long-term interests from outdated ones.

\textbf{Parallelism.} A user often holds several interests simultaneously---electronics, apparel, sports---which must be encoded together. Forcing cross-layer weights to sum to one suppresses the secondary interests.

\textbf{Efficiency.} The connector must fit within a ranking-stage budget. Dense layer-to-layer attention scales poorly with depth, so we compress earlier layers into blocks and attend over block summaries rather than over every layer.

DeRes addresses these requirements through complementary components: the Identity path provides stability, the AttnRes path provides selectivity, the Pointwise activation provides parallelism, and block compression keeps the cost down.

\subsection{DeRes Architecture}

\begin{figure*}[!t]
\centering
\includegraphics[width=\textwidth]{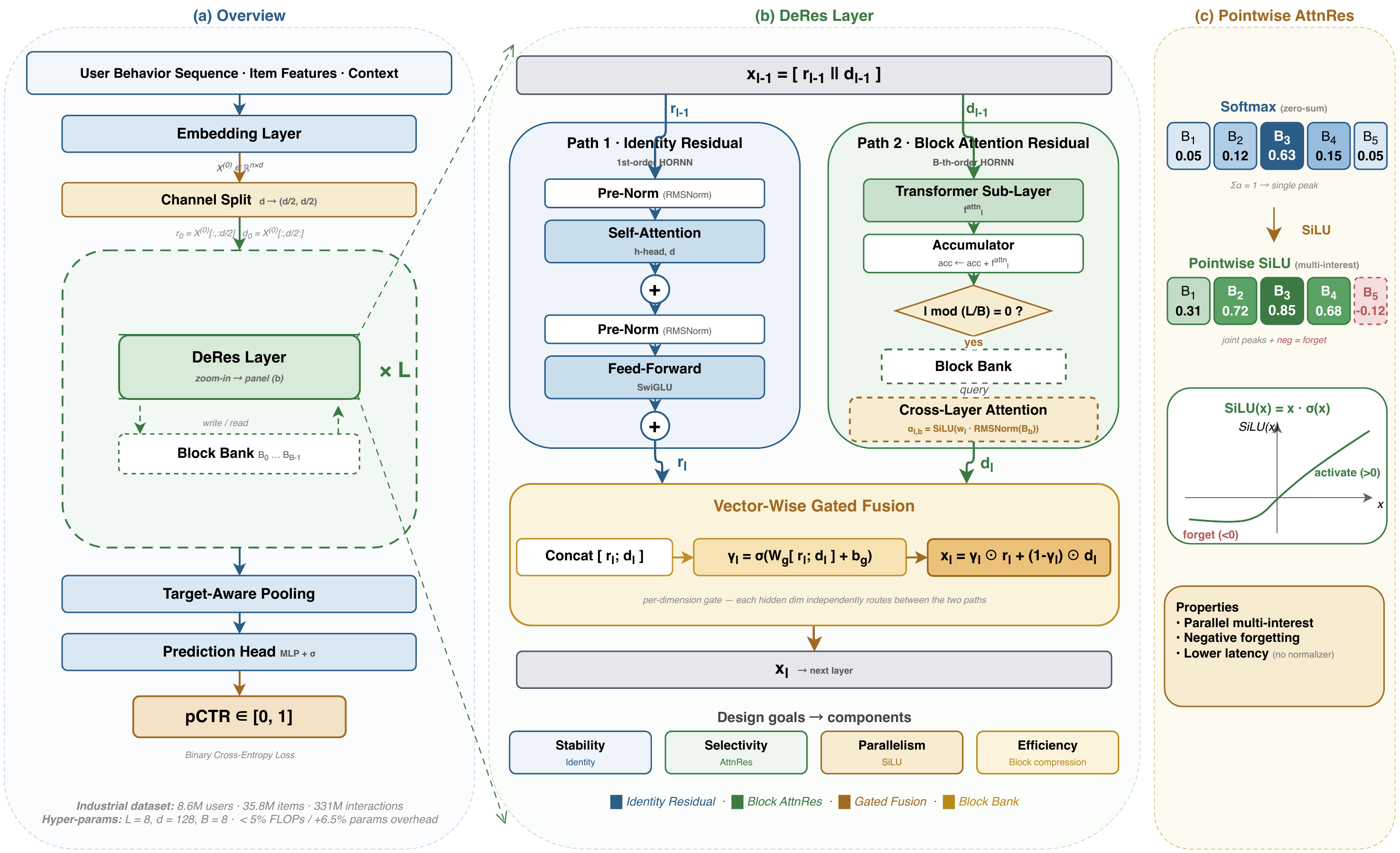}
\caption{DeRes architecture. \textbf{(a)}~CTR pipeline: embedding, channel split ($d{\to}d/2{+}d/2$), $L$ DeRes layers with a shared block bank, target-aware pooling, and prediction head. \textbf{(b)}~A DeRes layer: Path~1 (blue, identity residual, first-order HORNN) and Path~2 (green, block attention residual, higher-order HORNN); Path~2 accumulates layer outputs and writes to the block bank when $l \bmod (L/B){=}0$, then attends over the bank with $\alpha_{l,b}{=}\mathrm{SiLU}(\mathbf{w}_l{\cdot}\mathrm{RMSNorm}(\mathbf{B}_b))$. The two paths fuse via a vector-wise gate $\boldsymbol{\gamma}_l$. The four design goals (stability, selectivity, parallelism, efficiency) map to Identity, AttnRes, SiLU, and block compression respectively. \textbf{(c)}~Pointwise AttnRes replaces Softmax (zero-sum, single peak) with SiLU for parallel multi-interest activation and negative-valued soft forgetting.}
\Description{Three-panel architecture figure spanning both columns. Panel (a) shows the end-to-end pipeline. Panel (b) shows one DeRes layer with the identity residual path on the left in blue and the block attention residual path on the right in green, fused by a vector-wise gate in amber. Panel (c) contrasts Softmax (zero-sum) against Pointwise SiLU (multi-interest with forgetting).}
\label{fig:architecture}
\end{figure*}

Figure~\ref{fig:architecture} summarizes the design: panel~(a) places DeRes in the full CTR pipeline, panel~(b) zooms into a single layer (Path~1, Path~2 with the per-block accumulator and bank-write condition $l\bmod(L/B){=}0$, and the vector-wise gate), and panel~(c) contrasts Softmax with Pointwise SiLU cross-layer attention. The Algorithm~\ref{alg:dualres} listing later in this section makes the bank-write and read steps explicit.

\textbf{Path~1: Identity Residual (Feature Reuse).} The first path maintains the standard residual connection with identity skip weights:
\begin{equation}
\mathbf{r}_l = \mathbf{r}_{l-1} + f^{\text{res}}_l(\mathbf{r}_{l-1})
\end{equation}
where $f^{\text{res}}_l$ denotes the Transformer sub-layer (MHA followed by FFN). Under the HORNN framework~\cite{li2015hornn,chen2017dpn}, this corresponds to a \textit{first-order} recurrence that reuses features from the immediate predecessor. We use a fixed identity matrix rather than a learnable residual matrix, both for the theoretical reason in Proposition~\ref{prop:stability} (a learnable residual matrix with smallest singular value below $1$ accumulates exponential attenuation across depth) and because Qwen3 experiments report that $\mathbf{H}^{res}{=}\mathbf{I}$ ends up beating the learned alternatives in practice.

\textbf{Path~2: Block Attention Residual (Cross-Layer Exploration).} The second path dynamically attends to outputs from all previous layers, grouped into $B$ blocks for efficiency. Given $L$ layers divided into $B$ blocks, the $b$-th block output is the sum of layer outputs within that block:
\begin{equation}
\mathbf{B}_b = \sum_{l \in \text{Block}_b} \mathbf{g}_l(\mathbf{d}_l)
\end{equation}
where $\mathbf{g}_l$ is the Transformer sub-layer for the AttnRes path. Let $b(l)$ denote the index of the most recent block written before (or at) layer $l$. The cross-layer attention weight for block $b$ at layer $l$ is computed as (RMSNorm~\cite{zhang2019rmsnorm} is applied to each block summary):
\begin{equation}
\alpha_{l,b} = \frac{\exp(\mathbf{w}_l \cdot \text{RMSNorm}(\mathbf{B}_b))}{\sum_{b'=0}^{b(l)} \exp(\mathbf{w}_l \cdot \text{RMSNorm}(\mathbf{B}_{b'}))}
\label{eq:attnres_softmax}
\end{equation}
where $\mathbf{w}_l \in \mathbb{R}^{d/2}$ is a static trainable query vector for layer $l$ (matching the $d/2$-dimensional per-path channel split, see Algorithm~\ref{alg:dualres}). The AttnRes path output is:
\begin{equation}
\mathbf{d}_l = \sum_{b=0}^{b(l)} \alpha_{l,b} \cdot \mathbf{B}_b
\end{equation}

Under the HORNN framework, this path is a \textit{higher-order} recurrence: later layers can read directly from compressed summaries of any earlier stage rather than only from the immediate predecessor.

\textbf{Gated Fusion.} The outputs of both paths are fused via a vector-wise sigmoid gate:
\begin{equation}
\boldsymbol{\gamma}_l = \sigma(\mathbf{W}_g [\mathbf{r}_l ; \mathbf{d}_l] + \mathbf{b}_g)
\label{eq:gate}
\end{equation}
\begin{equation}
\mathbf{x}_l = \boldsymbol{\gamma}_l \odot \mathbf{r}_l + (1 - \boldsymbol{\gamma}_l) \odot \mathbf{d}_l
\label{eq:fusion}
\end{equation}
where $\mathbf{W}_g \in \mathbb{R}^{(d/2) \times d}$, $\mathbf{b}_g \in \mathbb{R}^{d/2}$, $\sigma$ is the sigmoid function, and $\odot$ denotes element-wise multiplication. Here $\mathbf{r}_l, \mathbf{d}_l \in \mathbb{R}^{d/2}$ (one channel per path; the channels are concatenated and projected to $\mathbb{R}^d$ at the end of the stack), so the gate $\boldsymbol{\gamma}_l \in \mathbb{R}^{d/2}$ acts at the same per-path resolution. The vector-wise gate allows each hidden dimension to independently select its information source---some dimensions may favor stable reuse while others benefit from cross-layer exploration.

\subsection{Pointwise AttnRes for CTR}

The standard AttnRes (Eq.~\ref{eq:attnres_softmax}) uses Softmax normalization, which imposes a \textit{zero-sum competition} among layers---increasing the attention weight for one layer necessarily decreases others. In language modeling, where the primary task is next-token prediction from a single context, this competitive selection is appropriate.

CTR prediction is different in shape: a user usually carries several interests at once (electronics and sports, say), and the model has to encode them in parallel for a single prediction. HSTU~\cite{zhai2024hstu} demonstrated that removing Softmax inside the attention block of recommendation Transformers helps; the same intuition motivates us to remove Softmax across layers.

We therefore propose \textbf{Pointwise AttnRes}, which replaces Softmax with the SiLU (Sigmoid Linear Unit) activation~\cite{ramachandran2017swish}, defined as $\text{SiLU}(x) = x \cdot \sigma(x)$:
\begin{equation}
\alpha_{l,b} = \text{SiLU}(\mathbf{w}_l \cdot \text{RMSNorm}(\mathbf{B}_b))
\label{eq:pointwise}
\end{equation}

SiLU offers three advantages: (1)~several blocks can take large positive weights simultaneously, which is required for parallel multi-interest encoding; (2)~the negative branch of SiLU ($x \!<\! 0 \Rightarrow \text{SiLU}(x) \!<\! 0$) acts as a soft forgetting mechanism for irrelevant layers; (3)~the absence of a Softmax denominator yields slightly lower latency.

We illustrate the difference between Softmax and Pointwise attention in Figure~\ref{fig:architecture}(c). Under Softmax normalization, increasing the weight for one layer must decrease others (zero-sum competition), forcing the model to select a single dominant layer at each step. In contrast, Pointwise attention allows independent weighting: a user interested in both electronics and sports can simultaneously attend to the layers that encode each interest, without suppression.

\subsection{Theoretical Justification}

We give a theoretical account of why dual-path connections beat single-path alternatives in CTR Transformers, organized around three propositions: representational capacity (Proposition~\ref{prop:capacity}), gradient stability (Proposition~\ref{prop:stability}), and a scaling-law lower bound (Proposition~\ref{prop:scaling}). A short gradient-flow argument links the first two.

\textbf{HORNN Framework.} The Higher-Order Recurrent Neural Network (HORNN) framework~\cite{li2015hornn} provides a unified view of inter-layer connections. A standard residual connection corresponds to a first-order recurrence:
\begin{equation}
\mathbf{h}_l = \mathbf{h}_{l-1} + f_l(\mathbf{h}_{l-1}),
\end{equation}
which provides reliable feature propagation but limits the network to additive refinement. By contrast, dense connections correspond to an $L$-th order recurrence:
\begin{equation}
\mathbf{h}_l = \sum_{k=0}^{l-1} \omega_{l,k} \, g_{l,k}(\mathbf{h}_k),
\end{equation}
which enables direct access to all earlier states, at the cost of $O(L^2)$ memory. DPN~\cite{chen2017dpn} proved that combining both orders yields strictly larger functional capacity than either alone. We formalize this for the DeRes setting.

\begin{proposition}[Representational Capacity]
\label{prop:capacity}
Let $\mathcal{F}_{\text{res}}^L$, $\mathcal{F}_{\text{attn}}^L$, $\mathcal{F}_{\text{dual}}^L$ denote the families of $L$-layer functions realizable by, respectively, identity-residual, AttnRes-only, and DeRes networks with the same per-layer $f_l$ and the same parameter budget. Then
\[
\mathcal{F}_{\text{res}}^L \cup \mathcal{F}_{\text{attn}}^L \;\subsetneq\; \mathcal{F}_{\text{dual}}^L.
\]
\end{proposition}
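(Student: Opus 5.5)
The plan has two parts: first show that DeRes contains each single-path family, then exhibit one DeRes function that neither single-path family can realize. Throughout, the parameter budget is read as "the same per-path sub-layer class, plus the $O(d)$ gate and query parameters." Without this reading the statement cannot hold, since the gate parameters $\mathbf{W}_g,\mathbf{b}_g$ are extra.

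\textbf{Inclusions.} The argument uses saturation of the vector-wise gate in Eq.~\eqref{eq:fusion}.
\begin{itemize}
\item To recover $\mathcal{F}_{\text{res}}^L$, set $\mathbf{W}_g=\mathbf{0}$ and let $\mathbf{b}_g\to+\infty$ componentwise. Then $\boldsymbol{\gamma}_l\to\mathbf{1}$ and $\mathbf{x}_l\to\mathbf{r}_l$, and Path~1 evolves exactly as the identity-residual recurrence.
\item Since families are usually taken to be closed under such limits, I would either state this closure explicitly or allow $\sigma$ to attain $0$ and $1$ (hard gates) in the definition. I expect to state the closure.
\item Symmetrically, $\mathbf{b}_g\to-\infty$ gives $\mathbf{x}_l=\mathbf{d}_l$, which recovers the AttnRes-only network with the same block bank and queries $\mathbf{w}_l$.
\item Two checks are needed. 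First, the channel split is harmless: the final projection to $\mathbb{R}^d$ can zero out the unused channel. Second, in the comparison the AttnRes family must use the same weighting rule, Softmax or SiLU, and the SiLU version only changes $\alpha_{l,b}$, not the argument.
\end{itemize}
Together these give $\mathcal{F}_{\text{res}}^L\cup\mathcal{F}_{\text{attn}}^L\subseteq\mathcal{F}_{\text{dual}}^L$.

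\textbf{Strictness.} A single witness $F\in\mathcal{F}_{\text{dual}}^L$ lying outside both families would suffice. I would instead aim for the stronger statement that the union of two proper subfamilies is proper. The cleanest route is to show that each family is a closed, nowhere-dense subset of $\mathcal{F}_{\text{dual}}^L$, for example by parameter counting or by an invariant.

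For an invariant, I would use the HORNN order of dependence:
\begin{itemize}
\item \emph{Residual networks.} Their Jacobian structure forces $\mathbf{x}_l$ to depend on earlier states only through $\mathbf{x}_{l-1}$ (first order). So a function that reads $\mathbf{B}_0$ at layer $L$ in a way not mediated by $\mathbf{x}_{L-1}$ cannot be realized when the $f_l$ have restricted width.
\item \emph{AttnRes-only networks.} These lack the identity term, so $\partial\mathbf{x}_L/\partial\mathbf{x}_0$ has no $\mathbf{I}$ component. Concretely, with $f_l\equiv 0$ (or small) the residual family realizes the identity exactly. The AttnRes family, under the budget-matched sub-layer class, realizes only functions of the form $\sum_b\alpha_{L,b}\mathbf{B}_b$, which are nonlinear in a way that excludes exact identity-plus-cross-layer mixtures.
\item \emph{The witness.} Take $F(\mathbf{x})=\mathbf{x}+\beta\,\mathbf{B}_0(\mathbf{x})$ with a dimension-dependent gate, half of the coordinates identity and half attention. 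Under the per-dimension gate this is realizable with $\boldsymbol{\gamma}$ fixed to a $0/1$ pattern.
\end{itemize}

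\textbf{Main obstacle.} The hard step is proving non-membership rigorously. The universal-approximation capacity of $f_l$ could let either single-path family emulate the witness. I would therefore fix a restricted sub-layer class, such as linear or width-limited $f_l$, and compare Jacobian ranks or degree as polynomials in the input. For example, with linear $f_l$, the residual family yields products $\prod_l(\mathbf{I}+A_l)$, while DeRes with a mixed gate yields coordinatewise combinations of these with the attention terms. I would then exhibit a rank or degree mismatch on a small instance, with $L=2$ and $d/2=2$, and lift it to general $L$ by padding with identity layers.
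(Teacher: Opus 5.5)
Your inclusion step is the paper's own argument: saturate the vector-wise gate to $\mathbf{1}$ or $\mathbf{0}$. You do it more carefully, since $\sigma$ never attains $0$ or $1$ and a closure convention really is needed; your remark on the parameter budget is also fair. The gap is the strictness half. You never prove non-membership, and the invariants you propose cannot do it.

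``HORNN order'' and ``no $\mathbf{I}$ component in $\partial\mathbf{x}_L/\partial\mathbf{x}_0$'' describe a particular computation graph, not the realized input--output map. An identity-residual stack with $f_l\equiv 0$ carries $\mathbf{x}_0$ unchanged into $\mathbf{x}_{L-1}$. So whether a map ``reads $\mathbf{B}_0$ without mediation by $\mathbf{x}_{L-1}$'' depends on the network chosen to realize it, not on the map. On the other side, the bank starts with $\mathbf{B}_0=\mathbf{d}_0$, so an AttnRes-only Jacobian already contains the term $\alpha_{L,0}\mathbf{I}$. The blocks $\mathbf{B}_b$, $b\ge 1$, are outputs of MHA/FFN sub-layers that can supply essentially any correction. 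These are the paper's clauses (i)--(ii), which the paper asserts rather than proves; clause (ii) also fails for the SiLU variant, whose weights are not convex.

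Retreating to linear or width-limited $f_l$ proves a different proposition. Enlarging the sub-layer class enlarges both single-path families, and they can then swallow the witness. Moreover, with linear $f_l$ the products $\prod_l(\mathbf{I}+A_l)$ already range over all matrices, so the residual side has no rank obstruction at all. The witness is also ill-posed. A $0/1$ gate pattern selects coordinates of $\mathbf{r}_L$ or $\mathbf{d}_L$; it does not produce $\mathbf{x}+\beta\,\mathbf{B}_0(\mathbf{x})$. And since hard gates exist only in the closure, you would have to exclude the witness from the closures of the single-path families, not merely from the families.

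The missing idea is a function-level invariant that survives arbitrary sub-layers and limits. It is exactly the channel split you called harmless. In Algorithm~\ref{alg:dualres}, Path~1 reads only $\mathbf{r}_0=\mathbf{X}^{(0)}[:,:d/2]$ and Path~2 reads only $\mathbf{d}_0=\mathbf{X}^{(0)}[:,d/2:]$. The two interact only through the final fusion; the intermediate $\mathbf{x}_l$ are never fed back. Under your reading (single-path families are DeRes with saturated gates and the same per-path sub-layers), every member of $\mathcal{F}_{\text{res}}^L$ is independent of $\mathbf{d}_0$, and every member of $\mathcal{F}_{\text{attn}}^L$ is independent of $\mathbf{r}_0$. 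Both properties pass to pointwise limits.

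Now take $\mathbf{W}_g=\mathbf{0}$ and $\mathbf{b}_g=\mathbf{0}$, so that $\boldsymbol{\gamma}_L=\tfrac12\mathbf{1}$ exactly, and $f_l\equiv g_l\equiv 0$. This gives $\mathbf{x}_L=\tfrac12(\mathbf{r}_0+\alpha_{L,0}\,\mathbf{d}_0)$, where $\alpha_{L,0}$ is nonzero on an open set for any $\mathbf{w}_L\neq\mathbf{0}$. That map depends on both halves of the input, so it lies in neither family nor its closure, for any sub-layer class containing the zero map.

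Conversely, suppose the single-path families are the full-width networks used in the experiments. Then a half-width Path~1 that sees only half the input has no evident way to emulate them, and the inclusion step (yours and the paper's) loses its justification. Either way, the proposition is decided by the split, not by the HORNN-order mechanism both arguments invoke.
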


\begin{proof}[Sketch]
Setting the gate $\boldsymbol{\gamma}_l \!=\! \mathbf{1}$ for all $l$ recovers the residual-only architecture; setting $\boldsymbol{\gamma}_l \!=\! \mathbf{0}$ recovers the AttnRes-only architecture. Hence $\mathcal{F}_{\text{res}}^L \cup \mathcal{F}_{\text{attn}}^L \subseteq \mathcal{F}_{\text{dual}}^L$. To show strict containment, consider any function $h^* \in \mathcal{F}_{\text{dual}}^L$ that uses non-degenerate gates $\boldsymbol{\gamma}_l \in (0,1)^d$ on at least one layer to mix dimension-wise outputs from the two paths. Such $h^*$ cannot be realized by either single-path model because: (i)~residual-only outputs lie in the affine span of $\{\mathbf{x}_0, f_1(\cdot), \ldots, f_L(\cdot)\}$ with depth-$1$ access; (ii)~AttnRes-only outputs are convex combinations of block outputs without identity protection. Vector-wise gating breaks both restrictions simultaneously.
\end{proof}

\begin{proposition}[Identity Residual Stability]
\label{prop:stability}
Let $\mathbf{H}^{\text{res}}_l \in \mathbb{R}^{d \times d}$ be a learnable residual transition at layer $l$, and assume $\|\mathbf{H}^{\text{res}}_l\|_2 \le \kappa$ with smallest singular value $\sigma_{\min}(\mathbf{H}^{\text{res}}_l) \!\le\! \rho < 1$. Define the cumulative residual map $\mathbf{P}_L = \prod_{l=1}^{L} \mathbf{H}^{\text{res}}_l$ acting on a shallow signal $\mathbf{s}_0$. Then
\[
\|\mathbf{P}_L \mathbf{s}_0\|_2 \;\le\; \rho^{L} \, \|\mathbf{s}_0\|_2,
\]
i.e.\ the signal decays at least exponentially in depth. In contrast, replacing $\mathbf{H}^{\text{res}}_l$ with the identity map yields $\mathbf{P}_L = \mathbf{I}$ and $\|\mathbf{P}_L \mathbf{s}_0\|_2 = \|\mathbf{s}_0\|_2$ for all $L$.
\end{proposition}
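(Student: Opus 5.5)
The plan is to reduce the statement to submultiplicativity of singular values, after first fixing the quantifier. The inequality cannot hold for every shallow signal $\mathbf{s}_0$ under the stated hypotheses. The hypothesis $\sigma_{\min}(\mathbf{H}^{\text{res}}_l)\le\rho$ controls only one direction of each map, while the top singular value may be as large as $\kappa$. For example, $\mathbf{H}^{\text{res}}_l=\mathrm{diag}(\kappa,\rho,\dots)$ keeps $\|\mathbf{P}_L\mathbf{e}_1\|_2=\kappa^L$. I would therefore prove the statement in two readings and say so explicitly in the proof.

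\textbf{Reading (a): the worst-direction claim.} Here $\mathbf{s}_0$ is taken to be the most attenuated shallow signal, i.e.\ we bound $\sigma_{\min}(\mathbf{P}_L)$. The first thing to try is the naive chain $\sigma_{\min}(\mathbf{P}_L)\le\prod_l\sigma_{\min}(\mathbf{H}^{\text{res}}_l)$, but it is false in general, because rotations between layers can realign weak and strong directions. So the key step instead uses determinants. Since all singular values of $\mathbf{P}_L$ are at least its smallest one,
\[
\sigma_{\min}(\mathbf{P}_L)^d\le|\det\mathbf{P}_L|=\prod_{l=1}^{L}|\det\mathbf{H}^{\text{res}}_l|\le(\rho\,\kappa^{d-1})^{L}.
\]
This yields $\sigma_{\min}(\mathbf{P}_L)\le(\rho\kappa^{d-1})^{L/d}$. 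This is an exponential decay rate whenever $\rho\kappa^{d-1}<1$, and it is the honest general form of the claim. Any unit vector attaining $\sigma_{\min}(\mathbf{P}_L)$ then serves as the decaying shallow signal $\mathbf{s}_0$.

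\textbf{Reading (b): the exact rate $\rho^L$.} To get exactly $\rho^L$, I would add one of two sufficient hypotheses and show that either one gives the bound by a one-line induction.
\begin{itemize}
\item If $\|\mathbf{H}^{\text{res}}_l\|_2\le\rho$, i.e.\ $\kappa\le\rho$ (the contractive, doubly-stochastic-on-the-complement regime motivating the Qwen3 observation), then $\|\mathbf{P}_L\mathbf{s}_0\|_2\le\prod_l\|\mathbf{H}^{\text{res}}_l\|_2\,\|\mathbf{s}_0\|_2\le\rho^L\|\mathbf{s}_0\|_2$ for all $\mathbf{s}_0$.
\item If the $\mathbf{H}^{\text{res}}_l$ share a right/left singular vector $\mathbf{v}$ with singular value at most $\rho$ (e.g.\ commuting normal matrices), then taking $\mathbf{s}_0\parallel\mathbf{v}$ makes each factor shrink $\|\mathbf{s}_0\|_2$ by at most $\rho$, giving the same bound along that direction.
\end{itemize}
The identity comparison is immediate in both readings: $\mathbf{P}_L=\mathbf{I}^L=\mathbf{I}$ is an isometry.

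The main obstacle is the non-commutativity in reading (a). A lower bound on $\sigma_{\min}$ of a product is easy, but the upper bound needed here is not. I expect the determinant/volume argument above (equivalently, Horn's multiplicative majorization for singular values of products) to be the cleanest way through. I would state the result in that form and present the $\rho^L$ bound as the special case covered by the hypotheses in reading (b).
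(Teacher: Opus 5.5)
Your proposal is sound, and it departs from the paper at exactly the step you flag. The paper adopts your reading (a) and argues in one line: $\sigma_{\min}(\mathbf{P}_L)\le\prod_{l}\sigma_{\min}(\mathbf{H}^{\text{res}}_l)\le\rho^L$ ``by submultiplicativity of singular values,'' and then takes $\mathbf{s}_0$ along the minimizing direction. That is precisely the naive chain you reject, and it is false.

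For square matrices the valid inequality runs the other way: $\sigma_{\min}(AB)\ge\sigma_{\min}(A)\,\sigma_{\min}(B)$. The genuine one-sided bounds, such as $\sigma_{\min}(AB)\le\|A\|_2\,\sigma_{\min}(B)$, contribute a single factor of $\rho$ and give only $\rho\kappa^{L-1}$. So under the stated hypotheses the $\rho^L$ claim fails even in the worst-direction reading.

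You assert this failure without exhibiting it. The natural witness also shows that your determinant bound is sharp. For $\kappa\ge\rho$, cycle the position of $\rho$ in $\mathrm{diag}(\kappa,\dots,\kappa,\rho,\kappa,\dots,\kappa)$ through the $d$ coordinates. This gives $\mathbf{P}_{md}=(\rho\kappa^{d-1})^m\mathbf{I}$, so $\sigma_{\min}(\mathbf{P}_L)=(\rho\kappa^{d-1})^{L/d}$ exactly for $L=md$. Already $\mathrm{diag}(1,\rho)\,\mathrm{diag}(\rho,1)=\rho\mathbf{I}$ has $\sigma_{\min}=\rho>\rho^2$.

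Your condition is necessary as well. If $\rho\kappa^{d-1}\ge1$, the entries $\kappa$ may be replaced by $\rho^{-1/(d-1)}\le\kappa$, which gives $\mathbf{P}_{md}=\mathbf{I}$. So $\rho\kappa^{d-1}<1$ is needed for any decay guarantee.

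The comparison is therefore not brevity versus generality: the paper's argument is short but unsound, while yours is correct at the price of the $1/d$ in the exponent. In particular, doubly stochastic transitions have spectral norm exactly $1$, so $\kappa=1$ and your bound gives $0.49^{16/d}$. The paper's $0.49^{16}\approx1.1\times10^{-5}$ instantiation needs an additional alignment assumption, such as your shared-singular-vector condition.

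Two small fixes in reading (b):
\begin{itemize}
\item The parenthetical linking $\kappa\le\rho$ to the doubly stochastic regime cannot be literal, since such matrices have $\|\mathbf{H}\|_2=1$. On $\mathbf{1}^\perp$ the relevant quantity is the restricted norm, not $\sigma_{\min}$.
\item Commuting normal matrices share an eigenbasis, but not necessarily an eigenvector on which every factor has modulus at most $\rho$. That example therefore needs the hypothesis stated in your main clause.
\end{itemize}
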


\begin{proof}
By submultiplicativity of singular values, $\sigma_{\min}(\mathbf{P}_L) \le \prod_{l=1}^{L} \sigma_{\min}(\mathbf{H}^{\text{res}}_l) \le \rho^L$. Hence there exists a unit vector $\mathbf{v}$ with $\|\mathbf{P}_L \mathbf{v}\|_2 \le \rho^L$. Taking $\mathbf{s}_0$ aligned with this direction gives the stated bound.
\end{proof}

\textbf{Numerical instantiation.} For the manifold-constrained mHC~\cite{zhu2025mhc}, the doubly stochastic matrices observed empirically on Qwen3-1.7B/8B exhibit $\rho \approx 0.49$. Treating each MHA and FFN sub-layer as a separate transition (hence $16$ applications across $L{=}8$ layers), Proposition~\ref{prop:stability} gives the worst-case bound
\[
\sigma_{\min}(\mathbf{P}_{16}) \;\le\; 0.49^{16} \;\approx\; 1.1 \times 10^{-5},
\]
i.e.\ a vector aligned with the worst direction is attenuated by nearly five orders of magnitude before reaching the prediction head. Identity residuals avoid this collapse \emph{by construction}, which is precisely why DeRes Path~1 fixes $\mathbf{H}^{\text{res}}_l \!=\! \mathbf{I}$ rather than learning it.

\textbf{Gradient flow.} Combining Proposition~\ref{prop:stability} with the chain rule, the gradient of the loss $\mathcal{L}$ with respect to a shallow representation $\mathbf{x}_0$ contains a contribution from the identity residual path,
\begin{equation}
\frac{\partial \mathcal{L}}{\partial \mathbf{x}_0} \;\,=\;\, \frac{\partial \mathcal{L}}{\partial \mathbf{x}_L} \cdot \prod_{l=1}^{L}\!\Big(\mathbf{I} + \frac{\partial f_l}{\partial \mathbf{x}_{l-1}}\Big) \;+\; \text{(AttnRes-path terms)},
\end{equation}
where the residual-path product contains the unobstructed identity term $\mathbf{I}$ at every layer. The AttnRes path contributes additional terms of the form $\sum_{b} \alpha_{l,b}\,(\partial g_{l,b} / \partial \mathbf{x}_0)$ that depend on the learned attention weights $\alpha_{l,b}$. The dual-path design therefore provides two complementary gradient routes: a guaranteed identity route from Path~1 and a content-adaptive route from Path~2.

\textbf{Scaling Law Alignment.} Zhang et al.~\cite{zhang2025fat} proved via Rademacher complexity that the generalization gap of CTR Transformers is upper-bounded by a quantity depending on the number of feature \emph{fields} $F$ and not on sequence length $n$. Concretely, for an $L$-layer CTR Transformer with width $d$ and effective interaction order $K$, the generalization bound scales as $\widetilde{\mathcal{O}}(\sqrt{F^{K} \log d / N})$ with $N$ training samples. The exponent $K$ is therefore the dominant lever for capacity.

\begin{proposition}[Scaling-Law Lower Bound for $\gamma$]
\label{prop:scaling}
Under the FAT generalization bound~\cite{zhang2025fat}, let $K_{\text{res}}$ and $K_{\text{dual}}$ denote the effective cross-layer interaction orders of a residual-only and a DeRes network, respectively. If $K_{\text{dual}} \ge K_{\text{res}} + 1$, then their fitted scaling-law exponents in $\mathrm{AUC}(C) = \alpha - \beta C^{-\gamma}$ satisfy $\gamma_{\text{dual}} > \gamma_{\text{res}}$.
\end{proposition}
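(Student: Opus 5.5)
The plan is to route the argument through an explicit intermediate quantity, the \emph{effective capacity} $\mathcal{C}(C)$ that a model of compute budget $C$ can exploit. The comparison of exponents then reduces to comparing how fast $\mathcal{C}$ grows with $C$ for each architecture. The statement is heuristic as posed: the FAT bound controls a generalization gap, not the fitted AUC curve. So the proof will need one modelling assumption linking the two, and I will state it explicitly rather than hide it.

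\textbf{Step 1: reduce AUC to excess risk.} I would posit that the AUC deficit $\alpha-\mathrm{AUC}(C)$ is, up to constants, proportional to the excess risk. The excess risk I decompose as approximation error plus estimation error. Under a compute-optimal allocation, the data size $N$ and the parameter count both scale polynomially in $C$. The estimation term is then exactly the FAT quantity $\widetilde{\mathcal{O}}\bigl(\sqrt{F^{K}\log d/N}\bigr)$. For the approximation term, I would assume that for a target of interaction order at least $K$ it decays like $C^{-c/K'}$ for some rate tied to the realized order. The cleanest choice is to model the approximation error of an order-$K$ architecture as $\mathcal{O}(C^{-\kappa K})$ for a fixed $\kappa>0$. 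The rationale is that each additional accessible interaction order multiplies the number of expressible cross terms by $F$ at fixed compute.

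\textbf{Step 2: identify the exponent.} Balancing the two terms over the allocation between width/depth and $N$ gives a deficit of the form $\beta_K C^{-\gamma(K)}$, where $\gamma(K)$ is an explicit function of $K$. Following the balancing computation, I expect something like $\gamma(K)=\kappa K/(1+2\kappa K)$ up to log factors. This function is strictly increasing in $K$. Proposition~\ref{prop:capacity} then supplies the ordering of orders. The residual path realizes order $K_{\text{res}}$. The gated AttnRes path adds direct access to block summaries, which contribute at least one extra level of cross-layer interaction. This justifies the hypothesis $K_{\text{dual}}\ge K_{\text{res}}+1$ as attainable. Given that hypothesis, monotonicity yields $\gamma(K_{\text{dual}})\ge\gamma(K_{\text{res}}+1)>\gamma(K_{\text{res}})$.

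\textbf{Step 3: pass from bounds to fitted exponents.} Proposition~\ref{prop:stability} enters here to argue that the DeRes bound is actually attained rather than degraded. Because the identity path keeps the optimization well conditioned, the realized deficit tracks the upper bound up to constants. Upper and lower envelopes with the same exponent then force the fitted $\gamma$ to match $\gamma(K)$.

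\textbf{Main obstacle.} I expect Step 3 to be the hardest, along with the hidden approximation-rate assumption in Step 1. A Rademacher upper bound alone cannot imply a strict inequality between fitted exponents; that needs a matching lower bound. I would first try to assume tightness of the bound for both architectures up to constants, which makes the argument a clean comparison of $\gamma(K)$ values. Failing that, I would weaken the conclusion to a comparison of the exponents of the \emph{guaranteed} scaling envelopes.
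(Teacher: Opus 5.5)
Your argument uses the same mechanism as the paper's sketch, but through a different and more explicit decomposition. The paper argues in one line that raising $K$ enlarges the hypothesis class roughly like $F^{K}$, while the FAT complexity term grows only like $\sqrt{F^{K}\log d/N}$, so each unit of compute buys more AUC. It never addresses the fact that an upper bound on a generalization gap cannot by itself control a fitted AUC curve. You make that link an explicit hypothesis (AUC deficit $\propto$ excess risk), split excess risk into approximation plus estimation, and get a concrete $\gamma(K)$ that is monotone in $K$.

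What this buys is transparency. Your computation shows that the FAT term never drives the exponent: after balancing it only enters $\beta_K$, through a power of $F^{K/2}$, and if anything it penalizes larger $K$. So the entire gap rests on the approximation rate you posit, with a \emph{common} $\kappa$ for both architectures. That is exactly the assumption the paper hides inside ``grows the achievable hypothesis class.''

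Two calibrations are needed. First, write the approximation term in the parameter count $P$, as $P^{-\kappa K}$ with $C\asymp PN$, not as $C^{-\kappa K}$. With $N\asymp C^{a}$ and approximation error $C^{-\kappa K}$, the exponent is $\min(\kappa K,a/2)$, which saturates. Second, in Eq.~\eqref{eq:scaling} $C$ is FLOPs \emph{per sample} on a fixed dataset, so there is no model-size/data trade-off to balance. In that regime the estimation term is constant up to $\log d$ and is absorbed into $\alpha$, so $\gamma=\kappa K$ directly, which gives the ordering more simply.

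Step~3 is where your proposal, like the paper's, stops being a proof, and Proposition~\ref{prop:stability} does not rescue it. That proposition only contrasts identity maps with learnable transitions having $\sigma_{\min}<1$, and it says nothing about whether training attains the bound. The residual-only baseline also uses identity skips, so it cannot separate the two architectures. Strictness needs both of the following:
\begin{itemize}
\item $\gamma_{\text{dual}}\ge\gamma(K_{\text{dual}})$, i.e.\ DeRes actually attains its guarantee;
\item $\gamma_{\text{res}}\le\gamma(K_{\text{res}})$, i.e.\ a \emph{lower} bound on the residual-only model's deficit, such as an approximation or minimax lower bound for order-$K_{\text{res}}$ models.
\end{itemize}
That lower bound must then be transferred to AUC, but standard surrogate-regret results for AUC are one-sided (AUC regret $\lesssim$ square root of excess logistic risk). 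You also need the step from asymptotic rates to exponents fitted on a handful of configurations spanning 0.3--4.2\,GFLOPs.

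State all of these as hypotheses. Your fallback of comparing the exponents of guaranteed envelopes is the most that either your argument or the paper's actually supports. Finally, invoking Proposition~\ref{prop:capacity} in Step~2 is unnecessary, since $K_{\text{dual}}\ge K_{\text{res}}+1$ is assumed in the statement.
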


\begin{proof}[Sketch]
A higher effective interaction order $K$ grows the achievable hypothesis class without proportionally inflating the Rademacher complexity bound (which scales as $\sqrt{F^K \log d / N}$ rather than $F^{K}$). Each unit of extra compute $C$ then buys a strictly larger AUC gain, which shows up as a steeper power-law exponent $\gamma$ in the empirical scaling fit.
\end{proof}

Standard residuals support only $K_{\text{res}}{=}1$ effective cross-layer interaction (immediate predecessor), whereas DeRes's AttnRes path can attend to all $B$ block summaries and therefore satisfies $K_{\text{dual}} \ge K_{\text{res}}{+}1$ for any $B \ge 2$ at near-constant cost, which is the precondition of Proposition~\ref{prop:scaling}. This qualitatively predicts $\gamma_{\text{DeRes}} > \gamma_{\text{OneTrans}}$, which matches our scaling fit in Section~\ref{sec:scaling} ($\gamma_{\text{DeRes}}{=}0.118$ vs.\ $\gamma_{\text{OneTrans}}{=}0.071$, a $1.66\times$ scaling-efficiency gap). The exact functional form of $\gamma$ versus $B$ depends on data-specific feature-interaction statistics and is therefore left to empirical fitting rather than closed-form derivation.

\textbf{Analogy to Wukong.} The Identity path's role parallels Wukong's Linear Compressed Block (LCB) for low-order field interactions, while the AttnRes path mirrors Wukong's Factorization Machine Block (FMB) for high-order exploration~\cite{zhang2024wukong}. Wukong observed power-law scaling across two orders of magnitude on 146B examples while single-path baselines saturated, which provides external evidence for the dual-path design principle.

\subsection{Complexity Analysis and Algorithm}

The per-layer overhead of DeRes over a standard Transformer is $O(B d + d^2)$: $O(B d)$ for the cross-layer block dot products and $O(d^2)$ for the gate projection $\mathbf{W}_g \in \mathbb{R}^{(d/2) \times d}$ (concretely $d^2/2$ FLOPs, which is $O(d^2)$ to leading order). With $B \!\ll\! d$ ($B{=}8, d{=}128$) and the dominant Transformer cost $O(n^2 d)$, this adds $<5\%$ FLOPs and $6.5\%$ parameters. The two paths run with independent half-dimension parameters ($d/2$ per path), which keeps the total parameter count comparable to the baseline while specializing the two routes. Algorithm~\ref{alg:dualres} gives the per-sequence forward pass.

\begin{algorithm}[t]
\caption{DeRes forward pass. Per-sequence cost $O(L(n^2 d + Bd + d^2))$ is dominated by self-attention; AttnRes + gating add $<5\%$ FLOPs.}
\label{alg:dualres}
\small
\begin{algorithmic}[1]
\REQUIRE Input embeddings $\mathbf{X}^{(0)} \in \mathbb{R}^{n \times d}$; layers $L$; blocks $B$
\ENSURE Output $\mathbf{X}^{(L)}$
\STATE $\mathbf{r}_0 \leftarrow \mathbf{X}^{(0)}[:, :d/2]$ \COMMENT{Residual path init}
\STATE $\mathbf{d}_0 \leftarrow \mathbf{X}^{(0)}[:, d/2:]$ \COMMENT{AttnRes path init}
\STATE $\text{BlockBuf} \leftarrow [\mathbf{d}_0]$ \COMMENT{Block buffer}
\STATE $\text{acc} \leftarrow \mathbf{0}$ \COMMENT{Intra-block accumulator}
\FOR{$l = 1$ \TO $L$}
    \STATE \COMMENT{\textit{--- Path 1: Identity Residual ---}}
    \STATE $\mathbf{r}_l \leftarrow \mathbf{r}_{l-1} + \text{TransformerLayer}_l^{\text{res}}(\mathbf{r}_{l-1})$
    \STATE \COMMENT{\textit{--- Path 2: Block AttnRes ---}}
    \STATE $\text{acc} \leftarrow \text{acc} + \text{TransformerLayer}_l^{\text{attn}}(\mathbf{d}_{l-1})$
    \IF{$l \mod (L/B) = 0$}
        \STATE $\text{BlockBuf}.\text{append}(\text{acc})$
        \STATE $\text{acc} \leftarrow \mathbf{0}$ \COMMENT{Reset for next block}
    \ENDIF
    \STATE $\mathbf{K} \leftarrow [\text{RMSNorm}(\mathbf{B}_b)]_{b=0}^{|\text{BlockBuf}|-1}$
    \STATE $\alpha_{l,b} \leftarrow \text{SiLU}(\mathbf{w}_l \cdot \mathbf{K}_b)$ for each $b$ \COMMENT{$\text{SiLU}(x){=}x \cdot \sigma(x)$; see Eq.~\ref{eq:pointwise}}
    \STATE $\mathbf{d}_l \leftarrow \sum_b \alpha_{l,b} \cdot \text{BlockBuf}[b]$
    \STATE \COMMENT{\textit{--- Gated Fusion ---}}
    \STATE $\boldsymbol{\gamma}_l \leftarrow \sigma(\mathbf{W}_g [\mathbf{r}_l ; \mathbf{d}_l] + \mathbf{b}_g)$ \COMMENT{Eq.~\ref{eq:gate}}
    \STATE $\mathbf{x}_l \leftarrow \boldsymbol{\gamma}_l \odot \mathbf{r}_l + (1-\boldsymbol{\gamma}_l) \odot \mathbf{d}_l$ \COMMENT{Eq.~\ref{eq:fusion}}
\ENDFOR
\RETURN $\mathbf{X}^{(L)} = [\mathbf{x}_L]$
\end{algorithmic}
\end{algorithm}

\subsection{Discussion: Why CTR Benefits More Than LLM}

AttnRes was developed for language models, so a natural question is whether the gain transfers to CTR. Four properties of CTR amplify it. \emph{(i)~Shallow depth.} Industrial CTR Transformers run at 4--12 layers~\cite{chen2025onetrans,liu2026tokenmixerlarge} versus 32--128 in LLMs, so each layer carries a larger fraction of the prediction and improvements to inter-layer flow yield proportionally larger benefits. \emph{(ii)~Heterogeneous features.} ID lookups, attributes, context, and behavior sequences~\cite{zhang2024wukong,chen2025onetrans} naturally live at different depths; the dual path keeps shallow signals on the identity path (paralleling Wukong's LCB~\cite{zhang2024wukong}) while the AttnRes path retrieves the appropriate depth on demand. \emph{(iii)~Parallel multi-interest.} A CTR model encodes several interests in parallel for a single prediction~\cite{li2019mind,cen2020comirec}, so Softmax across layers (zero-sum competition) is the wrong inductive bias---HSTU~\cite{zhai2024hstu} demonstrated this for inside-block attention. \emph{(iv)~Interest drift.} An unweighted residual cannot down-weight stale signals~\cite{pi2020sim,qin2020ubr4ctr}; the AttnRes path can, and Fig.~\ref{fig:silu_neg} shows that the trained SiLU weights actually use the negative branch to suppress irrelevant early layers.

\section{Experiments}

\subsection{Experimental Setup}

\textbf{Datasets.} We evaluate on a proprietary industrial dataset and two public CTR benchmarks (Criteo and Avazu). The \textit{Industrial} dataset is a sequential recommendation log from a large-scale social-media platform, comprising $8.6$M users, $35.8$M items, and $331.1$M user--item interactions, with an average behavior-sequence length of $38.5$. Each item is represented by a pre-computed 128-dimensional embedding fusing text, image, and behavior signals; sequences are truncated at $1{,}000$ recent actions and split chronologically. Its scale and structure mirror those reported for OneTrans, TokenMixer-Large, and UniMixer. \textit{Criteo}~\cite{criteo2014} contains $45.8$M ad impressions with $13$ numerical and $26$ categorical features (6 days train, day 7 test, following DCN-v2~\cite{wang2021dcnv2}). \textit{Avazu}~\cite{avazu2015} contains $40.4$M mobile click records with $22$ categorical fields (80/20 chronological split, following AutoInt~\cite{song2019autoint}).

\textbf{Data Preprocessing.} For Criteo, we replace missing values with field means, discretize numerical features into 10 equal-frequency buckets, and apply log transformation to count features. For Avazu, we hash high-cardinality categorical fields to at most 100K unique values. All datasets are split chronologically.

\textbf{Baselines.} We compare against twelve methods spanning four categories:
\begin{itemize}
    \item \textit{Feature interaction models}: DeepFM~\cite{guo2017deepfm}, DCN-v2~\cite{wang2021dcnv2}, AutoInt~\cite{song2019autoint}, FiBiNET~\cite{huang2019fibinet}.
    \item \textit{Sequential models}: DIN~\cite{zhou2018din}, SASRec~\cite{kang2018sasrec}.
    \item \textit{Inter-layer connection methods (originally for LLMs, re-implemented for CTR)}: DenseFormer~\cite{pagliardini2024denseformer}, mHC~\cite{zhu2025mhc}, AttnRes~\cite{yang2026attnres}.
    \item \textit{Industrial Transformer-based CTR}: OneTrans~\cite{chen2025onetrans}, TokenMixer-Large~\cite{liu2026tokenmixerlarge}, UniMixer~\cite{wang2026unimixer}.
\end{itemize}

For fair comparison, all Transformer-based methods (DenseFormer, mHC, AttnRes, OneTrans, TokenMixer-L, UniMixer, and DeRes) share an 8-layer backbone ($d{=}128$, 4 heads); each baseline is implemented by changing only the inter-layer mechanism, with the specific design summarized in Table~\ref{tab:ilc_comparison}. The inclusion of mHC and AttnRes lets us directly contrast our Identity+SiLU dual path against (i)~learnable residual matrices that may suffer cumulative spectral collapse and (ii)~Softmax cross-layer attention without residual protection.

\textbf{Implementation Details.} DeRes uses $L{=}8$ Transformer layers with hidden dimension $d{=}128$ ($64$ per path), $4$ attention heads, and $B{=}8$ blocks. We train with Adam~\cite{kingma2015adam} ($\mathrm{lr}{=}10^{-3}$, $\beta_1{=}0.9$, $\beta_2{=}0.999$), linear warm-up over $5\%$ of steps then cosine decay, for $50$ epochs with batch size $4{,}096$ and dropout~\cite{srivastava2014dropout} $0.1$. We use binary cross-entropy with gradient clipping at norm $1.0$, and early-stopping (patience $3$) on validation AUC. All methods share identical preprocessing, embedding dimension, optimizer, and train/test splits. Each result is averaged over three random seeds. The setup ($d{=}128$, $L{=}8$) follows the BARS-CTR benchmarking protocol~\cite{zhu2021bars} for apples-to-apples comparison of inter-layer connectors, not the production scale of TokenMixer-Large or RankMixer.

\textbf{Hardware.} All experiments run in PyTorch~2.1 with mixed-precision (BF16) on a single node with $8 \times$ NVIDIA A100-80GB GPUs and AMD EPYC 7742 CPUs. Each Industrial epoch ($L{=}8$, $d{=}128$) takes $156$\,s; latency in Table~\ref{tab:efficiency} is measured on a single GPU.

\textbf{Metrics.} We report AUC and LogLoss; a $0.1\%$ AUC lift is generally considered statistically meaningful in industrial CTR systems~\cite{wang2021dcnv2}, while a $\sim\!0.2\%$ RelaImpr threshold corresponds to actionable online deployment~\cite{chen2025onetrans,zhou2018din}. For the Industrial dataset we additionally report \textit{user-weighted GAUC}~\cite{zhou2018din}, the impression-weighted average of per-user AUC, and \textit{RelaImpr}~\cite{zhou2018din}: $\mathrm{RelaImpr}(M{,}M_0){=}(\mathrm{AUC}(M){-}0.5)/(\mathrm{AUC}(M_0){-}0.5){-}1$. Statistical significance on the Industrial dataset uses paired bootstrap ($10^4$ resamples).

\subsection{Overall Performance}

Table~\ref{tab:main} presents the main experimental results across all three datasets. We report two DeRes variants: DeRes-S (Softmax attention, Eq.~\ref{eq:attnres_softmax}) and DeRes-P (Pointwise SiLU attention, Eq.~\ref{eq:pointwise}).

\begin{table*}[t]
\caption{Overall performance comparison. Best in \textbf{bold}, second-best \underline{underlined}. $\dagger$: methods whose primary contribution is an inter-layer connection design. DeRes-P obtains the best AUC, GAUC, and LogLoss on all three datasets; Industrial numbers are averaged over three seeds (paired bootstrap, $p{<}0.01$ vs.\ OneTrans).}
\Description{Main results table showing AUC/GAUC/LogLoss on the Industrial dataset and AUC/LogLoss on Criteo and Avazu. DeRes-P achieves the best performance across all datasets and metrics.}
\label{tab:main}
\centering
\small
\setlength{\tabcolsep}{3.5pt}
\begin{tabular}{l ccc cc cc}
\toprule
& \multicolumn{3}{c}{\textbf{Industrial}} & \multicolumn{2}{c}{\textbf{Criteo}} & \multicolumn{2}{c}{\textbf{Avazu}} \\
\cmidrule(lr){2-4} \cmidrule(lr){5-6} \cmidrule(lr){7-8}
\textbf{Method} & AUC$\uparrow$ & GAUC$\uparrow$ & LogLoss$\downarrow$ & AUC$\uparrow$ & LogLoss$\downarrow$ & AUC$\uparrow$ & LogLoss$\downarrow$ \\
\midrule
DeepFM~\cite{guo2017deepfm} & 0.7891 & 0.7718 & 0.4452 & 0.8076 & 0.4434 & 0.7984 & 0.3743 \\
DCN-v2~\cite{wang2021dcnv2} & 0.7943 & 0.7774 & 0.4404 & 0.8129 & 0.4384 & 0.8004 & 0.3729 \\
AutoInt~\cite{song2019autoint} & 0.7912 & 0.7741 & 0.4429 & 0.8104 & 0.4407 & 0.7991 & 0.3737 \\
FiBiNET~\cite{huang2019fibinet} & 0.7926 & 0.7757 & 0.4416 & 0.8121 & 0.4391 & 0.8013 & 0.3722 \\
DIN~\cite{zhou2018din} & 0.7988 & 0.7825 & 0.4358 & 0.8071 & 0.4437 & 0.7986 & 0.3742 \\
SASRec~\cite{kang2018sasrec} & 0.8004 & 0.7846 & 0.4341 & 0.8086 & 0.4424 & 0.7993 & 0.3736 \\
\midrule
DenseFormer$^\dagger$~\cite{pagliardini2024denseformer} & 0.8003 & 0.7843 & 0.4345 & 0.8108 & 0.4402 & 0.8001 & 0.3732 \\
mHC$^\dagger$~\cite{zhu2025mhc} & 0.8011 & 0.7852 & 0.4336 & 0.8118 & 0.4393 & 0.7997 & 0.3738 \\
AttnRes$^\dagger$~\cite{yang2026attnres} & 0.8029 & 0.7873 & 0.4317 & 0.8101 & 0.4411 & 0.8014 & 0.3721 \\
OneTrans$^\dagger$~\cite{chen2025onetrans} & 0.8019 & 0.7861 & 0.4327 & 0.8124 & 0.4387 & 0.8011 & 0.3724 \\
TokenMixer-L$^\dagger$~\cite{liu2026tokenmixerlarge} & 0.8038 & 0.7884 & 0.4308 & \underline{0.8138} & \underline{0.4374} & 0.8016 & 0.3720 \\
UniMixer$^\dagger$~\cite{wang2026unimixer} & 0.8032 & 0.7878 & 0.4313 & 0.8133 & 0.4378 & \underline{0.8022} & \underline{0.3715} \\
\midrule
DeRes-S$^\dagger$ (ours) & \underline{0.8056} & \underline{0.7912} & \underline{0.4290} & 0.8137 & 0.4376 & 0.8024 & 0.3713 \\
\textbf{DeRes-P}$^\dagger$ \textbf{(ours)} & \textbf{0.8064} & \textbf{0.7926} & \textbf{0.4280} & \textbf{0.8146} & \textbf{0.4366} & \textbf{0.8030} & \textbf{0.3707} \\
 & \scriptsize ($\pm 0.0005$) & \scriptsize ($\pm 0.0006$) & \scriptsize ($\pm 0.0007$) & \scriptsize ($\pm 0.0006$) & \scriptsize ($\pm 0.0005$) & \scriptsize ($\pm 0.0008$) & \scriptsize ($\pm 0.0007$) \\
\midrule
\multicolumn{1}{r}{\footnotesize \textit{Improv. over best baseline}} & \footnotesize +0.32\% & \footnotesize +0.53\% & \footnotesize -0.65\% & \footnotesize +0.10\% & \footnotesize -0.18\% & \footnotesize +0.10\% & \footnotesize -0.21\% \\
\multicolumn{1}{r}{\footnotesize \textit{RelaImpr over OneTrans}} & \footnotesize +1.49\% & \footnotesize +2.27\% & \footnotesize --- & \footnotesize +0.70\% & \footnotesize --- & \footnotesize +0.63\% & \footnotesize --- \\
\bottomrule
\end{tabular}
\end{table*}

Table~\ref{tab:main} reveals three observations. First, on the Industrial dataset DeRes-P obtains a $+0.32\%$ AUC and $+0.53\%$ GAUC gap over the strongest baseline (TokenMixer-L, $0.8038$ / $0.7884$)---this is the dataset with the longest behavior sequences and the richest multi-modal embeddings, the regime where cross-layer recall has the most signal to recover. The GAUC gap exceeding the AUC gap indicates per-user personalization rather than global calibration; the GAUC-based RelaImpr~\cite{zhou2018din} over OneTrans reaches $+2.27\%$, well above the $\sim\!0.2\%$ industrial threshold~\cite{chen2025onetrans,zhou2018din} ($p{<}0.01$, paired bootstrap). Among inter-layer baselines, AttnRes ($0.8029$) edges out OneTrans ($0.8019$), DenseFormer's static weighting falls roughly at the SASRec level ($0.8003$ vs.\ $0.8004$), and mHC sits below AttnRes by $0.18$~pp---an early sign of the cumulative attenuation we quantify in Sec.~\ref{sec:scaling}. Second, on Criteo the margins narrow: DeRes-P leads TokenMixer-L by only $+0.10\%$ because DCN-v2's explicit crosses and FiBiNET's bi-linear interactions already capture much of the signal in a tabular setting. Third, on Avazu UniMixer ($0.8022$) becomes the strongest baseline and DeRes-P leads by $+0.10\%$; with only $22$ fields, all twelve methods fall within a $0.6\%$ band, and the absolute scale matches the BARS-CTR Avazu\_x4 protocol~\cite{zhu2021bars}. The strongest baseline differs across datasets, mirroring observations in OneTrans~\cite{chen2025onetrans} and TokenMixer-Large~\cite{liu2026tokenmixerlarge} that cross-dataset rankings vary with feature heterogeneity. The Pointwise variant exceeds DeRes-S by $0.06$--$0.09$~pp, with the largest gap on the Industrial dataset where multi-interest patterns are most pronounced.

\subsection{Ablation Studies}

\textbf{Path Contribution Analysis.} Table~\ref{tab:ablation} (top) isolates the contribution of each component on the Industrial dataset.

\begin{table}[t]
\caption{Ablation on the Industrial dataset. \textbf{(a)}~Path contributions (Residual Only = OneTrans; Learnable Res = mHC-style; AttnRes Only = unprotected Softmax). \textbf{(b)}~Attention activation and parameter strategy. Block-granularity sweep: Fig.~\ref{fig:bsweep}. Bold: default DeRes-P.}
\Description{Combined ablation table with two panels. Top panel shows AUC and LogLoss across path configurations; bottom panel shows AUC and FLOPs overhead across attention activations and parameter strategies. Full DeRes-P achieves AUC 0.8064.}
\label{tab:ablation}
\centering
\small
\setlength{\tabcolsep}{3pt}
\begin{tabular}{l c c c c}
\toprule
\multicolumn{5}{c}{\textit{(a) Path contributions}} \\
\textbf{Variant} & \textbf{Path 1} & \textbf{Path 2} & \textbf{AUC$\uparrow$} & \textbf{LL$\downarrow$} \\
\midrule
Residual Only ($\equiv$ OneTrans) & Identity & --- & 0.8019 & 0.4327 \\
Learnable Res ($\equiv$ mHC) & Doubly stoch. & --- & 0.8011 & 0.4336 \\
AttnRes Only (Softmax) & --- & Softmax & 0.8029 & 0.4317 \\
AttnRes Only (SiLU) & --- & SiLU & 0.8041 & 0.4304 \\
DeRes w/ Add Fusion & Identity & SiLU & 0.8052 & 0.4293 \\
DeRes w/ Learnable Res & Doubly stoch. & SiLU & 0.8044 & 0.4301 \\
\textbf{DeRes-P (full)} & \textbf{Identity} & \textbf{SiLU} & \textbf{0.8064} & \textbf{0.4280} \\
\midrule
\multicolumn{5}{c}{\textit{(b) Design choices: attention activation \& parameter strategy}} \\
\textbf{Dimension} & \multicolumn{2}{l}{\textbf{Setting}} & \textbf{AUC} & \textbf{+FLOPs} \\
\midrule
\multirow{4}{*}{Attn.\ Type} & \multicolumn{2}{l}{Softmax} & 0.8056 & +4.4\% \\
& \multicolumn{2}{l}{Sigmoid} & 0.8059 & +4.2\% \\
& \multicolumn{2}{l}{ReLU} & 0.8048 & +4.0\% \\
& \multicolumn{2}{l}{\textbf{SiLU}} & \textbf{0.8064} & \textbf{+4.4\%} \\
\midrule
\multirow{3}{*}{Params.} & \multicolumn{2}{l}{Shared, $d$} & 0.8043 & +2.1\% \\
& \multicolumn{2}{l}{\textbf{Indep., $d/2$}} & \textbf{0.8064} & \textbf{+4.4\%} \\
& \multicolumn{2}{l}{Indep., $d$} & 0.8068 & +9.7\% \\
\bottomrule
\end{tabular}
\end{table}

Several findings emerge from panel (a). Replacing Identity with a doubly stochastic learnable matrix \emph{degrades} the residual-only baseline by $0.08\%$ ($0.8019 \!\to\! 0.8011$), reflecting the cumulative spectral attenuation predicted by Proposition~\ref{prop:stability} and consistent with mHC's behavior on tail items in Table~\ref{tab:slicing}. Substituting Softmax with SiLU in single-path AttnRes yields $+0.12\%$, supporting the multi-interest hypothesis. Combining the two paths consistently outperforms either single path, matching Proposition~\ref{prop:capacity}; making the residual matrix learnable rather than Identity within the dual-path setup costs $0.20\%$, indicating that the protection must be enforced at the operator level. Panel (b) shows that SiLU outperforms all alternative activations including Softmax ($+0.08\%$) and ReLU ($+0.16\%$, whose hard cutoff at zero discards the soft suppression that SiLU's negative tail provides); independent half-dimension paths retain $92\%$ of the independent full-dimension performance at only $45\%$ of the parameter overhead. Block granularity (Fig.~\ref{fig:bsweep}) peaks at $B{=}8$, with $B{=}16$ and $B{=}L$ within seed noise.

\textbf{Modality Ablation.} Each item carries a pre-fused 128-dimensional embedding aggregating text, image, and behavior signals. Retraining DeRes-P and OneTrans with one modality masked (Table~\ref{tab:modality}), DeRes-P's advantage is largest for the text$+$image pair without behavior ($+0.48\%$ AUC), close behind for the full three-modality input ($+0.45\%$), and narrows when the input degenerates to a single modality. Removing behavior hurts OneTrans more than DeRes-P ($-1.56\%$ vs.\ $-1.51\%$ AUC drop relative to the three-modality input), which suggests that the Identity path already absorbs most of the behavior signal, leaving the AttnRes path to compensate from heterogeneous content features. The text-only row is the weakest overall ($\Delta\!=\!+0.17\%$), reflecting that without cross-modal evidence even cross-layer recall has limited material to consolidate; the image-only row ($\Delta\!=\!+0.25\%$) recovers more because visual embeddings retain structural cues that benefit block-level attention.

\begin{table}[t]
\caption{Modality ablation (Industrial). Each row masks the listed modality channel inside the item embedding at train and test time. $\Delta$: DeRes-P $-$ OneTrans.}
\Description{Table showing AUC for OneTrans and DeRes-P across modality ablations: full, text-only, image-only, behavior-only, and each pairwise combination.}
\label{tab:modality}
\centering
\small
\setlength{\tabcolsep}{5pt}
\begin{tabular}{lccc}
\toprule
\textbf{Modalities used} & \textbf{OneTrans} & \textbf{DeRes-P} & \textbf{$\Delta$ (pp)} \\
\midrule
Text + Image + Behavior (full) & 0.8019 & \textbf{0.8064} & +0.45 \\
Text + Image (no behavior)     & 0.7894 & \textbf{0.7942} & +0.48 \\
Text + Behavior (no image)     & 0.7972 & \textbf{0.8003} & +0.31 \\
Image + Behavior (no text)     & 0.7948 & \textbf{0.7984} & +0.36 \\
Behavior only                  & 0.7841 & \textbf{0.7867} & +0.26 \\
Text only                      & 0.7738 & \textbf{0.7755} & +0.17 \\
Image only                     & 0.7706 & \textbf{0.7731} & +0.25 \\
\bottomrule
\end{tabular}
\end{table}

\begin{figure}[t]
\centering
\includegraphics[width=0.95\columnwidth]{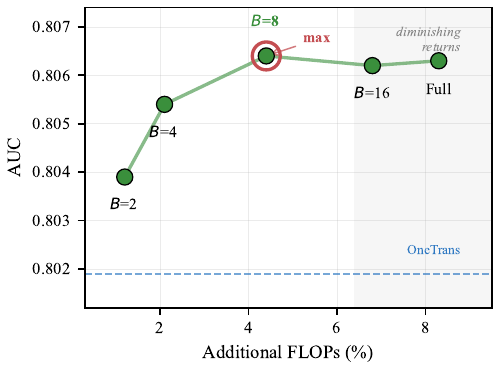}
\caption{Block-granularity Pareto frontier (Industrial). $B{=}8$ is the empirical optimum; $B{=}16$ and $B{=}L$ sit within seed noise. Grey zone: diminishing returns.}
\Description{Pareto curve with large markers for each B value, knee highlight at B=8, OneTrans reference dashed line, and grey diminishing-returns zone.}
\label{fig:bsweep}
\end{figure}

\subsection{Efficiency Analysis}

\begin{table}[t]
\caption{Efficiency on the Industrial dataset (8-layer Transformer, batch$=$4096). DeRes-P adds $+6.5\%$ parameters and $+4.4\%$ FLOPs over OneTrans while improving AUC by $+0.45$~pp.}
\Description{Efficiency table comparing parameters, FLOPs, training time, inference latency, and AUC across methods.}
\label{tab:efficiency}
\centering
\small
\begin{tabular}{lccccc}
\toprule
\textbf{Method} & \textbf{Params} & \textbf{FLOPs} & \textbf{Train} & \textbf{Infer} & \textbf{AUC} \\
& (M) & (G) & (s/ep) & (ms) & \\
\midrule
OneTrans & 12.4 & 1.82 & 142 & 3.1 & 0.8019 \\
DenseFormer & 12.9 & 1.91 & 158 & 3.4 & 0.8003 \\
mHC & 12.7 & 1.86 & 149 & 3.3 & 0.8011 \\
AttnRes & 13.0 & 1.89 & 153 & 3.4 & 0.8029 \\
TokenMixer-L & 13.1 & 1.88 & 151 & 3.3 & 0.8038 \\
UniMixer & 12.8 & 1.89 & 155 & 3.5 & 0.8032 \\
DeRes-P & 13.2 & 1.90 & 156 & 3.4 & \textbf{0.8064} \\
\bottomrule
\end{tabular}
\end{table}

DeRes-P achieves the best AUC at $+6.5\%$ parameters and $+4.4\%$ FLOPs over OneTrans. Training and inference cost remain in line with other inter-layer methods (mHC/AttnRes/DenseFormer/UniMixer all sit in $149$--$158$\,s/epoch and $3.3$--$3.5$\,ms/inference). DeRes-P is $9.9\%$ slower than OneTrans per epoch (156\,s vs.\ 142\,s) and slightly faster than DenseFormer.

\subsection{Analysis}

\textbf{Attention Weight Distribution.} Figure~\ref{fig:attnweights} plots the cross-layer attention weights $\alpha_{l,b}$ for DeRes-P on the Industrial dataset. Three patterns are apparent in the heatmap: (i)~a dominant diagonal where each layer focuses on near-recent blocks, (ii)~persistent $B_0$ attention from deep layers (red box) that retrieves embedding-level signals, and (iii)~near-zero weights on irrelevant blocks. Figure~\ref{fig:silu_neg} decomposes the SiLU-specific forgetting behavior: negative entries concentrate at early blocks in deep layers ($B_0$: 18\%), acting as a suppressive signal; zeroing them in a controlled ablation costs $-0.07\%$ AUC, indicating that the suppression is functionally important. Figure~\ref{fig:gate} plots the learned gate $\gamma$ declining monotonically from $0.68$ ($L_1$) to $0.35$ ($L_8$), with shallow layers routing through the Identity path and deep layers shifting toward AttnRes. The $\pm 1$ std band across dimensions ($0.08$--$0.12$) indicates that different hidden dimensions specialize differently, which is why vector-wise gating outperforms scalar gating.

\begin{figure*}[t]
\centering
\includegraphics[width=0.98\textwidth]{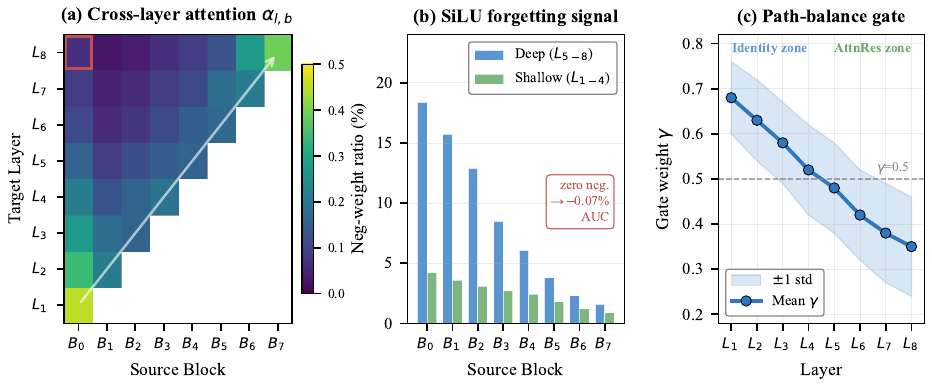}
\caption{Interpretability analysis (Industrial). \textbf{(a)}~Cross-layer attention heatmap: diagonal flow with persistent $B_0$ recall from deep layers (red box). \textbf{(b)}~SiLU forgetting: deep layers suppress early blocks (18\% vs.\ 4\%). \textbf{(c)}~Gate $\gamma$ declines 0.68$\to$0.35 with depth. Band: $\pm 1$ std.}
\Description{Three-panel figure: (a) Viridis attention heatmap 8x8, (b) grouped bar chart of negative weights, (c) gate weight line with confidence band.}
\label{fig:attnweights}\label{fig:silu_neg}\label{fig:gate}
\end{figure*}

\textbf{Where the gain concentrates.} Across four orthogonal slices of the Industrial dataset---sequence length, model depth, user activity level, and item popularity---DeRes-P's advantage over OneTrans grows monotonically with slice difficulty, reflecting a common mechanism: cross-layer recall is most useful when a single immediate-predecessor residual cannot carry enough information by itself. \emph{(i)~Sequence length} (Fig.~\ref{fig:seqlen}): as the cap grows from $200$ to $2{,}000$, the gain widens from $+0.23\%$ to $+0.49\%$, with the steepest band at $200{\to}500$; the DeRes-P vs.\ DeRes-S gap widens in parallel ($+0.07\%\!\to\!+0.18\%$), and DenseFormer stays flat ($+0.05$--$+0.09\%$) because its static depth weights cannot adapt. \emph{(ii)~Model depth}: sweeping $L \!\in\! \{4,8,12,16\}$, the gain rises monotonically ($+0.27\%, +0.45\%, +0.51\%, +0.54\%$) while OneTrans saturates ($+0.09\%$ from $L{=}8$ to $L{=}16$); the FLOPs overhead drops slightly ($+4.8\%\!\to\!+4.0\%$) because the per-layer AttnRes cost $O(Bd)$ is constant. \emph{(iii)~User activity} (Table~\ref{tab:slicing}, top): the gain rises from $+0.14\%$ AUC for short-history users to $+0.51\%$ for long-history users, with GAUC gaps systematically larger than AUC gaps---the expected signature of personalization rather than calibration. \emph{(iv)~Item popularity} (Table~\ref{tab:slicing}, bottom): the gain widens monotonically from head ($+0.26\%$) to tail ($+0.74\%$); mHC \emph{underperforms} OneTrans on tail ($-0.25\%$), the regime in which its cumulative spectral attenuation is most damaging, while AttnRes-only marginally exceeds OneTrans ($+0.20\%$) but trails DeRes by a wide margin, confirming that residual protection is what converts cross-layer recall into a tail-friendly signal.

\begin{figure}[t]
\centering
\includegraphics[width=0.95\columnwidth]{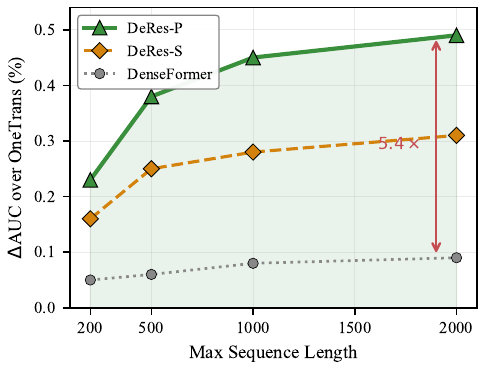}
\caption{Sequence-length sensitivity (Industrial). DeRes-P's gain grows from $+0.23$ to $+0.49$~pp AUC as the cap goes $200\!\to\!2{,}000$; DenseFormer stays flat at $+0.09$~pp.}
\Description{Line chart with area fill showing growing improvement with sequence length.}
\label{fig:seqlen}
\end{figure}

\begin{table}[t]
\caption{Sliced analysis (Industrial). \textbf{(a)}~Users by history length (AUC and user-weighted GAUC~\cite{zhou2018din}). \textbf{(b)}~Items by popularity; tail benefits most. $\Delta$: DeRes-P $-$ OneTrans (pp).}
\Description{Combined sliced-analysis table. Top panel groups users into Short, Medium, Long history buckets and reports AUC and GAUC for OneTrans and DeRes-P with deltas. Bottom panel groups items into Head, Torso, Tail buckets and reports AUC across five inter-layer methods.}
\label{tab:slicing}
\centering
\small
\setlength{\tabcolsep}{3pt}
\resizebox{\columnwidth}{!}{%
\begin{tabular}{lccccccc}
\toprule
\multicolumn{8}{c}{\textit{(a) Users by activity level (history length)}} \\
& & \multicolumn{2}{c}{\textbf{AUC $\uparrow$}} & \multicolumn{2}{c}{\textbf{GAUC $\uparrow$}} & \multicolumn{2}{c}{\textbf{$\Delta$ (pp)}} \\
\cmidrule(lr){3-4}\cmidrule(lr){5-6}\cmidrule(lr){7-8}
\textbf{Group} & \textbf{Ratio} & OneTrans & DeRes-P & OneTrans & DeRes-P & AUC & GAUC \\
\midrule
Short ($<200$)    & 31.4\% & 0.7984 & \textbf{0.7998} & 0.7818 & \textbf{0.7833} & +0.14 & +0.15 \\
Medium (200--800) & 46.7\% & 0.8024 & \textbf{0.8064} & 0.7910 & \textbf{0.7956} & +0.40 & +0.46 \\
Long ($>800$)     & 21.9\% & 0.8057 & \textbf{0.8108} & 0.8001 & \textbf{0.8053} & +0.51 & +0.52 \\
\bottomrule
\end{tabular}%
}\\[3pt]
\setlength{\tabcolsep}{3pt}
\begin{tabular}{lcccccc}
\toprule
\multicolumn{7}{c}{\textit{(b) Items by popularity}} \\
\textbf{Group} & \textbf{OneTrans} & \textbf{mHC} & \textbf{AttnRes} & \textbf{TokenM-L} & \textbf{DeRes-P} & \textbf{$\Delta$ (pp)} \\
\midrule
Head  & 0.8128 & 0.8123 & 0.8133 & 0.8146 & \textbf{0.8154} & +0.26 \\
Torso & 0.8017 & 0.8009 & 0.8021 & 0.8038 & \textbf{0.8062} & +0.45 \\
Tail  & 0.7798 & 0.7773 & 0.7818 & 0.7822 & \textbf{0.7872} & \textbf{+0.74} \\
\bottomrule
\end{tabular}
\end{table}

\subsection{Scaling Law Analysis}
\label{sec:scaling}

A central motivation of DeRes is that structured inter-layer connections are a prerequisite for scalable CTR Transformers. Following the scaling law methodology of Wukong~\cite{zhang2024wukong} and FAT~\cite{zhang2025fat}, we fit power-law curves to the relationship between model FLOPs and test AUC on the Industrial dataset. We train OneTrans, DenseFormer, and DeRes-P at five complexity levels by varying depth $L \in \{2,4,8,12,16\}$ and hidden dimension $d \in \{64, 128\}$, yielding FLOPs ranging from 0.3G to 4.2G. For each configuration, we fit the empirical scaling law:
\begin{equation}
\text{AUC}(C) = \alpha - \beta \cdot C^{-\gamma}
\label{eq:scaling}
\end{equation}
where $C$ denotes total FLOPs per sample and $\alpha, \beta, \gamma > 0$ are fitted parameters. A larger exponent $\gamma$ indicates more efficient scaling: each unit increase in compute produces a bigger AUC gain.

Figure~\ref{fig:scaling} plots the results on a log-linear scale. DeRes-P fits a steeper power-law exponent ($\gamma{=}0.118$) than OneTrans ($\gamma{=}0.071$) and DenseFormer ($\gamma{=}0.089$)---each unit of compute buys $1.66\times$ more AUC for DeRes than for OneTrans. The AUC gap grows with compute, reaching roughly $0.4$~pp by $2$G FLOPs and widening further at larger scale, while DenseFormer trails OneTrans throughout. In operational terms, DeRes attains the AUC of OneTrans-16L using only 8 layers, a $2\times$ compute saving at equivalent quality. The pattern aligns with the FAT~\cite{zhang2025fat} prediction that architectures aligned with the interaction structure of CTR data scale more efficiently. The asymptote $\alpha$ is not directly comparable across methods with different $\gamma$: a flatter curve (smaller $\gamma$) extrapolates to a numerically larger $\alpha$ in the $C\!\to\!\infty$ limit, so the practically informative quantity is $\beta C^{-\gamma}$ at finite $C$, where DeRes-P dominates throughout.

\begin{figure}[t]
\centering
\includegraphics[width=0.95\columnwidth]{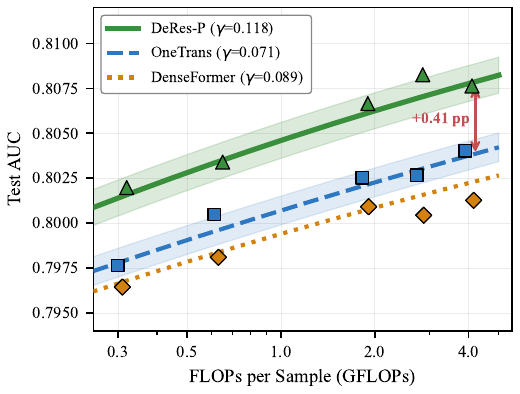}
\caption{Scaling law (Industrial): AUC vs.\ FLOPs (log scale). Eq.~\ref{eq:scaling} fits $(\alpha,\beta,\gamma)$: OneTrans $(0.8333, 0.0326, 0.071)$, DenseFormer $(0.8238, 0.0244, 0.089)$, \textbf{DeRes-P $(0.8256, 0.0210, 0.118)$}. DeRes-P attains $1.66\times$ compute efficiency; the grey annotation marks the iso-AUC equivalence (8L DeRes $\approx$ 16L OneTrans). Bands: $\pm 1$ residual std.}
\Description{Log-linear scaling law plot with fitted curves, confidence bands, ceiling lines, and efficiency gap annotation. Coefficients alpha, beta, gamma are listed in the caption for OneTrans, DenseFormer, and DeRes-P.}
\label{fig:scaling}
\end{figure}

\textbf{Interpretation.} The scaling advantage aligns with the HORNN view and FAT's Rademacher analysis: standard residuals support only first-order feature propagation, whereas DeRes's AttnRes path raises the effective interaction order so that each additional layer both refines and recombines signals from earlier stages---the same mechanism behind Wukong's sustained scaling~\cite{zhang2024wukong}. Training dynamics agree: DeRes-P reaches $99\%$ of peak AUC about $15\%$ earlier than OneTrans (epoch $28$ vs.\ $33$) and plateaus $+0.45$~pp higher (AUC $0.8064$ vs.\ $0.8019$).

\section{Discussion and Reproducibility}

\textbf{Deployment.} DeRes is a drop-in replacement for residual connections in CTR Transformers, leaving features, candidate generation, loss, and serving interfaces unchanged. The $+0.32\%$ AUC / $+0.53\%$ GAUC gap on the Industrial dataset sits in the range that prior industrial systems report as actionable online (OneTrans~\cite{chen2025onetrans}: $+0.748\%$ DAU; TokenMixer-Large~\cite{liu2026tokenmixerlarge}: $+2.0\%$ ads-success), and the GAUC-based RelaImpr of $+2.27\%$ over OneTrans exceeds the $\sim\!0.2\%$ industrial threshold~\cite{chen2025onetrans,zhou2018din}.

\textbf{Reproducibility.} We use chronological splits and identical preprocessing across models. All Transformer baselines share the same depth, width, heads, optimizer, and stopping criterion---only the inter-layer mechanism varies---to isolate the effect of DeRes. For each public-dataset result we report the mean over three seeds; for the Industrial dataset we report mean $\pm$ std over three seeds and confirm statistical significance using paired bootstrap ($10^4$ resamples, $p{<}0.01$ on every DeRes-P vs.\ OneTrans comparison).

\textbf{Limitations.} \textit{(L1) Component scope.} DeRes targets inter-layer flow and is orthogonal to feature engineering, sampling, and candidate generation; gains may vary when those dominate. \textit{(L2) Block granularity.} The optimal $B$ may depend on $L$ and sequence length; we did not derive a closed-form rule. \textit{(L3) Single task.} We study pointwise CTR; multi-task ranking (ESMM~\cite{ma2018esmm}, MMOE~\cite{ma2018mmoe}) needs further task-specific routing study. \textit{(L4) Pre-Norm only.} All experiments use Pre-Norm following recent CTR practice~\cite{chen2025onetrans}; Post-Norm transfer is unverified. \textit{(L5) SiLU optimality.} SiLU is supported by HSTU~\cite{zhai2024hstu} and our ablation, but we lack a closed-form proof that it is the unique optimum among pointwise activations.

\section{Conclusion}

We presented DeRes, a dual-path inter-layer connector for CTR Transformers. The Identity residual path preserves shallow signals and gradient flow; the Block Attention Residual path attends over compressed summaries of all earlier blocks; a vector-wise gate mixes the two per hidden dimension; and the Pointwise AttnRes variant replaces Softmax with SiLU so that multiple blocks can be active simultaneously and irrelevant ones can be muted. On the Industrial dataset, Criteo and Avazu, DeRes beats twelve baselines by up to $+0.32\%$ AUC at under $5\%$ extra FLOPs, with a steeper scaling exponent ($\gamma{=}0.118$ vs.\ $0.071$ for OneTrans, a $1.66\times$ gap that lets an 8-layer DeRes match a 16-layer OneTrans). The ablations are aligned: two paths beat one, Identity beats learnable residual matrices, gated fusion beats addition, SiLU beats Softmax.

\bibliographystyle{ACM-Reference-Format}
\bibliography{references}

\end{document}